\definecolor{light-gray}{gray}{0.9}
\newtheorem{definition}{Definition}
	\newtheorem{lemma}{Lemma}%
	\newtheorem{theorem}{Theorem}%
\newcommand{\harisnew}[1]{\textcolor{black}{#1}}
	\newcommand\eat[1]{}
	\newlength{\wordlength}
	\newcommand{\eqclass}[2][]{\ifthenelse{\equal{#1}{}}{[#2]}{[#2]_{\sim_{#1}}}}
	\newcommand{\ceil}[1]{\lceil #1 \rceil }
	\newcommand{\floor}[1]{\lfloor #1 \rfloor }
		\newcommand{\spref}{\succ\xspace}
\newcommand{\nbh}[1][]{
	\ifthenelse{\equal{#1}{}}{\nu}{\nu(#1)}
}
\newcommand{\cstr}[1][]{
	\ifthenelse{\equal{#1}{}}{\mathscr S}{\cstr(#1)}
}
\newcommand{\choice}[1][]{
	\ifthenelse{\equal{#1}{}}{\mathit{C}}{\choice(#1)}

		\newcommand{\ml}[1][]{\ensuremath{\ifthenelse{\equal{#1}{}}{\mathit{ML}}{\mathit{ML}(#1)}}\xspace}
		\newcommand{\sml}[1][]{\ensuremath{\ifthenelse{\equal{#1}{}}{\mathit{SML}}{\mathit{SML}(#1)}}\xspace}
		\newcommand{\sd}[1][]{\ensuremath{\ifthenelse{\equal{#1}{}}{\mathit{SD}}{\mathit{SD}(#1)}}\xspace}
		\newcommand{\rsd}[1][]{\ensuremath{\ifthenelse{\equal{#1}{}}{\mathit{RSD}}{\mathit{RSD}(#1)}}\xspace}
		\newcommand{\rd}[1][]{\ensuremath{\ifthenelse{\equal{#1}{}}{\mathit{RD}}{\mathit{RD}(#1)}}\xspace}
		\newcommand{\st}[1][]{\ensuremath{\ifthenelse{\equal{#1}{}}{\mathit{ST}}{\mathit{ST}(#1)}}\xspace}
		\newcommand{\bd}[1][]{\ensuremath{\ifthenelse{\equal{#1}{}}{\mathit{BD}}{\mathit{BD}(#1)}}\xspace}
		\newcommand{\pc}[1][]{\ensuremath{\ifthenelse{\equal{#1}{}}{\mathit{PC}}{\mathit{PC}(#1)}}\xspace}
		\newcommand{\dl}[1][]{\ensuremath{\ifthenelse{\equal{#1}{}}{\mathit{DL}}{\mathit{DL}(#1)}}\xspace}
		\newcommand{\ul}[1][]{\ensuremath{\ifthenelse{\equal{#1}{}}{\mathit{UL}}{\mathit{UL}(#1)}}\xspace}

			\newcommand{\indiff}{\ensuremath{\sim}}}
\begin{document}

%
%
%
%
%
%
%
%


	\title{A characterization of \\proportionally representative committees}




	\author{Haris Aziz \ \  and \ \  Barton E. Lee\thanks{UNSW Sydney and Data61 CSIRO, Australia. Email: \href{mailto: haris.aziz@unsw.edu.au}{haris.aziz@unsw.edu.au}; \href{mailto: barton.e.lee@gmail.com}{barton.e.lee@gmail.com}. We thank Bill Zwicker for helpful comments and discussions.} }
%



	\maketitle

		\begin{abstract}
			A well-known axiom for proportional representation is Proportionality of Solid Coalitions (PSC). We characterize committees satisfying PSC as possible outcomes of the Minimal Demand rule, which generalizes an approach pioneered by Michael Dummett.
	\end{abstract}
	
	\bigskip
	
\noindent		\textbf{Keywords:} 	committee selection, multi-winner voting, proportional representation, single transferable vote.
		
	\noindent	\textbf{JEL}: C62, C63, and C78




		\section{Introduction} 

		In multiwinner elections, a central concern is proportional representation of voters.\footnote{See, for example, the prominent electoral reform movements   by  FairVote (\url{https://www.fairvote.org})  and the Electoral Reform Society (\url{https://www.electoral-reform.org.uk}).  }
		When voters elicit ranked preferences over candidates, one particular axiom for proportional representation is Proportionality of Solid Coalitions (PSC). This axiom was advocated by \cite{Dumm84a} and has been referred to as the most important requirement for proportional representation~\citep{Tide95a,TiRi00a,Wood94a,Wood97a}.\footnote{For broader discussion on axioms for multiwinner voting, see the chapter by \citet{FSST17a}.} 

		PSC is the subject of many theoretical and empirical studies. Theoretical studies have focused on designing voting rules that satisfy PSC; these include
		single transferable vote (STV)~\citep{Tide95a}, Quota Borda System (QBS)~\citep{Dumm84a}, Schulz-STV~\citep{Schu11a}, and the Expanding Approvals Rule (EAR)~\citep{AzLe19a}.\footnote{{There is also a growing literature that explores issues of proportionality in the context of \emph{participatory budgeting}~\citep[see, e.g.,][]{AzLe21,ALT18a,FPPV21}.}} STV is the most prominent of these rules\footnote{STV is used for elections in   Australia, Ireland, India, and Pakistan.} and has attracted significant attention in the literature from both a theoretical~\citep{Gell05,Howa90,Mill07,PePe17,Ray86,VanD93} and empirical perspective~\citep{EnTo14,FMM96,LaMc05}. However, few studies consider the structure imposed by the PSC axiom on election outcomes.

		In this note, we present a characterization of PSC committees as the range of outcomes from a certain class of procedures, which we formalize and call 
 \emph{Minimal Demand (MD)} rules. {MD generalizes an approach pioneered by Michael Dummett who proposed one particular rule within our wider class. }
We also present an alternative way of viewing MD in terms of a ``Dummett Tree,'' which represents a decision tree with decisions taken at each branching of a node. Our main result is the following:  \emph{A committee satisfies PSC if and only it is a possible outcome of MD if and only if it is an outcome of a branch of a Dummett tree. }  

		This contribution is important because it provides an intuitive and tractable method for researchers to analyze the demands of PSC on committee outcomes. We hope that this characterization will be { useful as a stepping stone toward understanding the interaction between PSC and other axioms. }


		\section{Preliminaries}

		%


		We consider the standard social choice setting with a set of voters $N=\{1,\ldots, n\}$, a set of candidates $C=\{c_1,\ldots, c_m\}$ and a preference profile $\spref=(\spref_1,\ldots,\spref_n)$ such that each $\spref_i$ is linear order over $C$. {If $c_s \spref_i c_t$ then we say that voter $i$ \emph{prefers} candidate $c_s$ to candidate $c_t$.} 
		 Given $j\in \{1, \ldots, m\}$,  the \emph{$j$-prefix} of a voter's preference  is their (unordered) set of $j$ most preferred candidates.   
	The goal is to select a committee $W\subset C$ of pre-determined size $k$. 
		
		{The focus of this paper is on committees that satisfy the Proportionality of Solid Coalitions (PSC) axiom. Before formally defining PSC, we introduce the notion of a solid coalition which is central to the PSC axiom.  Intuitively, a set of voters $N'$ forms a solid coalition for a set of candidates $C'$ if every voter in $N'$ prefers every candidate in $C'$ to any candidate outside of $C'$.  Importantly, voters that form a solid coalition for a candidate-set $C'$ are not required to have identical preference orderings over candidates within $C'$ nor $C\backslash C'$.}

		\begin{definition}[Solid coalition]
		A set of voters $N'$ is a \emph{solid coalition} for a set of candidates $C'$ if for all $i\in N'$ and for any $c'\in C'$
		$$\forall c\in C\backslash C' \quad c'\spref_i c. $$
		The candidates in $C'$ are said to be \emph{{ (solidly)} supported} by the voter set $N'$, and, conversely, the voter set $N'$ is said to be \emph{{ (solidly)}  supporting} the candidate set $C'$.
		\end{definition}

{We now state the PSC definition. Informally, PSC requires that a committee $W$ ``adequately" represents the preferences of solid coalitions that are ``sufficiently large" by including some candidates that they support in relation to the size of the solid coalition.  Here ``adequately" and ``sufficiently large" are defined according to a parameter $q\in (n/(k+1), n/k]$ and leads to a hierarchy\footnote{If an outcome $W$ satisfies  $q$-PSC, then $W$ satisfies $q'$-PSC for all $q'>q$ \citep[Lemma 2]{AzLe19a}.} of PSC definitions denoted by $q$-PSC.}\footnote{See~\cite[Footnote 10]{AzLe19a} for justification of the bounds on  $q$.}

		\begin{definition}[$q$-PSC]
		Let $q\in (n/(k+1), n/k]$.  A committee $W$ satisfies $q$-PSC if for every positive integer $\ell$, and for every solid coalition $N'$ supporting a candidate subset $C'$ with size $|N'|\ge \ell q$,  the following holds
		$$|W\cap C'| \ge \min\{\ell, |C'|\}.$$
		{Given a subset $\tilde{W}\subseteq C$ and a solid coalition $N'$ supporting $C'$, we say that $N'$ has an \emph{unmet PSC demand} if $|\tilde{W}\cap C'| < \min\{ \floor{|N'|/q} ,|C'|\}$.}
		\end{definition} 
		
		{The $q$-PSC axiom captures intuitive features of proportional representation. The axiom ensures representation of minority voters so long as they share similar preferences over candidates, i.e., they form a solid coalition, and the amount of representation given to  a group of voters that form a solid coalition is (approximately) in proportion to their size.} { For the remainder of the paper, we will explore the $q$-PSC axiom for fixed $q$; hence, abusing notation slightly, we refer to $q$-PSC as simply PSC.}

			 \section{Minimal Demand Rule and the Dummett Tree}
	 
			 Dummett proposed the Quota Borda System (QBS) rule as follows.
			It examines the prefixes (of increasing sizes) of the preference lists of voters and checks if there exists a corresponding solid coalition for a set of voters. If there is such a solid set of voters, then an appropriate number of candidates with the highest Borda count are selected so as to satisfy the corresponding PSC demand.

		 We view Dummett's approach as a special case of a more general class of rules, which we call    \textbf{Minimal Demand (MD)} rules.	{ We formalize  the MD rule below in 3 steps.}  Just like in QBS,   MD rules  examine  prefixes (of increasing sizes) of the preference lists of voters and checks if there exists a corresponding solid coalition for a set of voters.\footnote{{ Note that each voter-partition in Step~1 is a solid coalition supporting their $j$-prefix.}} If there is such a solid set of voters, then a minimal subset of candidates is sequentially selected to satisfy the corresponding PSC demand.
		 

		
		For a given $q\in (n/(k+1), n/k]$, the MD rule is implemented as follows. Initialize $W=\emptyset$ and $j=1$.    
		\begin{description}[leftmargin = 1.43cm]
		\item[Step 1.] Partition the set of voters into equivalence classes where each class has the same $j$-prefix
		\item[Step 2.] If there exists an equivalence class of voters $N'\subseteq N$ with $j$-prefix  $C'$ such that $|W\cap C'|<\min\{\floor{|N'|/q}, |C'|\}$, then for any such $c'\in C'\backslash W$ update $W$ to $W\cup \{c'\}$. Repeat with the updated $W$ until no additional candidate can be added.
		\item[Step 3.] If $j<m$, update $j$ to $j+1$ and repeat from step (i). Otherwise, terminate and output $W$. 
		\end{description}
		
	  
	  Lemma~\ref{lemma: 1} verifies that the MD rule terminates and outputs a committee of size $k$.

		\begin{lemma}\label{lemma: 1}
The MD rule always terminates and outputs $W \ : \ |W|=k$. 
\end{lemma}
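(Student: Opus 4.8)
The plan is to prove the two assertions separately: first that the procedure halts, and then that the committee it returns has size exactly $k$. Termination is immediate. The outer loop runs $j$ from $1$ to $m$ and then stops, and within a fixed $j$ the inner loop of Step~2 only ever enlarges $W$; since $W \subseteq C$ and $|C| = m$, at most $m$ candidates can be added before no class has an unmet demand, so Step~2 terminates. The size claim splits into a lower bound $|W| \ge k$ and an upper bound $|W| \le k$. Throughout I will use that, since $q \in (n/(k+1), n/k]$, we have $n/q \in [k, k+1)$ and hence $\floor{n/q} = k$.

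For the lower bound I would use the final iteration $j = m$. Here every voter's $m$-prefix is all of $C$, so Step~1 produces the single class $N$ supporting $C' = C$. Its demand is $\min\{\floor{n/q}, |C|\} = \min\{k, m\} = k$ (as necessarily $m \ge k$), and Step~2 keeps adding candidates from $C \setminus W$ until $|W \cap C| \ge k$. Since $W \subseteq C$ this forces $|W| = |W \cap C| \ge k$ at termination.

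The upper bound $|W| \le k$ is the crux. I would establish the invariant that $|W| \le k$ holds at every point of the execution, equivalently that a candidate is never added once $|W| = k$. The natural certificate is a charging of committee members to disjoint blocks of voters: I would try to show that at any stage $j$ one can assign to each $c \in W$ a set of $q$ voters who rank $c$ within their current $j$-prefix, with the assigned sets pairwise disjoint. Such a ``$q$-fold matching'' gives $q|W| \le n$, whence $|W| \le n/q < k+1$ and so $|W| \le k$. By the defect form of Hall's theorem, such an assignment exists precisely when, for every $S \subseteq W$, at least $q|S|$ voters have some member of $S$ in their $j$-prefix. The structural fact that should feed this condition is that each voter's prefixes are nested, i.e.\ the $j$-prefix is contained in the $(j+1)$-prefix; consequently a candidate added at an early stage to meet a coalition's demand remains inside that coalition's prefix at all later stages, so earlier additions keep counting and are never rendered redundant. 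Combined with the subadditivity of floors, $\sum_s \floor{|N'_s|/q} \le \floor{n/q} = k$ over any partition $\{N'_s\}$ of $N$ into equivalence classes, this should pin the total number of additions at $k$.

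The main obstacle I anticipate is verifying this matching condition, and it is genuinely delicate because the equivalence classes of Step~1 at consecutive stages are neither refinements nor coarsenings of one another (two voters can share a $(j+1)$-prefix while having different $j$-prefixes, and conversely), and because the candidate sets $C'_s$ of distinct classes at the same stage can overlap, so a single committee member may simultaneously satisfy the demands of several coalitions. I would handle this by inducting on the additions in the order they occur: when $c$ is added for class $N'$ with $|W \cap C'| < \floor{|N'|/q}$, the coalition $N'$ has at least $q(|W\cap C'|+1)$ voters, of which the members already in $W \cap C'$ can be charged inside $N'$, leaving a fresh block of at least $q$ voters in $N'$ to charge $c$ to. Making this charging globally consistent across overlapping coalitions and successive stages, using nesting to keep previously charged blocks valid, is the part that will require the most care.
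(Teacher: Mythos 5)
Your argument is essentially the paper's: termination and the lower bound (via the single class at stage $j=m$ demanding $\min\{\lfloor n/q\rfloor,|C|\}=k$) are handled identically, and your disjoint $q$-voter charging of committee members is precisely the one-to-one mapping $\phi$ — from each elected candidate to a distinct block of $\lceil q\rceil$ voters having that candidate in their current prefix — on which the paper's proof of the upper bound rests, with prefix-nesting preserving the invariant across stages exactly as you describe. The only cosmetic difference is that you would close with the global count $q|W|\le n<(k+1)q$ whereas the paper derives a contradiction by pigeonholing within the coalition whose demand is still unmet; the delicate step you honestly flag (making the charging globally consistent across overlapping coalitions) is asserted with no more detail in the paper itself.
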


\begin{proof}
First, the MD rule    terminates because { the ``if'' condition in Step~2 can always be satisfied by adding all candidates in $C'$.}

Second, the MD rule outputs $W$ such that $|W|\ge k$. {This is because, at stage $j=m$, all of the voters have the same $j$-prefix, equal to $C$; hence,  the if condition in Step~2 is satisfied  for any $W \ : \ |W|<k$.}

{It remains to prove that the MD rule's output $W$ never exceeds size $k$. For sake of contradiction, suppose that $|W|>k$. Given $j\in \{1, \ldots, m\}$, let $W_j\subseteq W$ be the set of candidates elected at the termination of stage $j$.  Consider   stage $j^* \ : \  W_{j^*-1}\le k$ and $W_{j^*}>k$. At some point during stage $j^*$,  $k$ candidates were   elected by the MD rule (denote this set $W_{j^*}'$) and, yet, there exists an equivalence class of voters $N'$ with $j$-prefix $C'$ such that $|W_{j^*}'\cap C'| <\min\{\floor{|N'|/q}, |C'|\}$. By   definition of the MD rule and  PSC,  a one-to-one mapping $\phi$ can be constructed between each  of the $k$ elected candidate at this point, say $c'\in W_{j^*}'$, and a distinct set of $\ceil{q}$ voters that  all have $c'$ in their $j^*$-prefix. Now consider the equivalence class of voters $N'$.   By the  pigeon-hole principle and existence of $\phi$,  at least $|N'|/q$ candidates in $W_{j^*}'$ are contained in the $j^*$-prefix of voters in $N'$. Thus, $|W_{j^*}'\cap C'| \ge  |N'|/q $---a contradiction.}
 \end{proof}

		Since MD specifies a candidate subset from which a candidate should be selected, rather than specifying a single candidate, there is a great deal of flexibility in how these ``ties" are resolved, i.e., which candidate from the subset is selected. By considering different tie-breaking decisions, we can  attain different outcomes of MD. These outcomes can be represented in the form of a decision tree, which we will refer to as the \textbf{Dummett Tree}. 

				The Dummett tree can be viewed as a tree corresponding to how the \harisnew{(possibly) non-deterministic} MD rule can be run depending on the selection of candidates in each stage. 
		Each node along a path of the Dummett tree corresponds to a stage. 
		The Dummett tree has depth  $m$. 
		If $k$ candidates have already been selected by stage $j$, we still go over all the stages. 
		At each stage $j$ of the tree the rule only considers the $j$-prefixes of voter preferences.
		Accordingly, only those PSC demands that pertain to the most preferred $j$ candidates of each voter are considered. Each path of the tree reflects the selection of candidates. Each node in the path represents the selection of candidates at that point keeping in view the candidates already selected at nodes higher up in the path.\footnote{Equivalently, one may also include ``null" nodes, which represent stages where no candidate was selected, in the tree.}





 		\section{MD, Dummett Trees and PSC}

		We present some connections between MD, Dummett Trees and PSC.	Lemma~\ref{lemma: 2} states that each possible outcome of MD satisfies PSC. To the best of our knowledge, it is the first formal argument that MD returns a PSC committee and the outcome being PSC does not depend on the specific choices made during the algorithm.

		\begin{lemma}\label{lemma: 2}
			Each path along the Dummett Tree gives rise to  a committee satisfying PSC. Equivalently, each possible outcome of the MD rule satisfies PSC.
			\end{lemma}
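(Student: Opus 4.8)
The plan is to prove the equivalent statement that the committee $W$ returned by any run of the MD rule has no unmet PSC demand. First I would note that, for a fixed solid coalition $N'$ supporting $C'$, the family of PSC constraints indexed by $\ell$ is tightest at $\ell=\floor{|N'|/q}$: the quantity $\min\{\ell,|C'|\}$ is nondecreasing in $\ell$, while the hypothesis $|N'|\ge \ell q$ holds exactly for $\ell\le\floor{|N'|/q}$. Thus PSC reduces to the single inequality $|W\cap C'|\ge\min\{\floor{|N'|/q},|C'|\}$ for every solid coalition $N'$ supporting $C'$, i.e. to the absence of any unmet PSC demand.

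The crux is to connect an arbitrary solid coalition to the equivalence classes the rule actually inspects. The key observation is that if $N'$ solidly supports $C'$ and $|C'|=j$, then every voter in $N'$ ranks all $j$ candidates of $C'$ above every other candidate, so each such voter has $j$-prefix exactly $C'$. Hence $N'$ is contained in the (possibly larger) equivalence class $\hat N$ consisting of all voters whose $j$-prefix is $C'$---precisely the class that Step~1 forms and Step~2 examines at stage $j$.

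Next I would use what Step~2 guarantees at stage $j=|C'|$: the rule repeatedly adds candidates from $C'\setminus W$ until the ``if'' condition fails for $\hat N$, so the committee $W_j$ obtained at the end of stage $j$ satisfies $|W_j\cap C'|\ge\min\{\floor{|\hat N|/q},|C'|\}$. Since $N'\subseteq\hat N$ gives $|\hat N|\ge|N'|$ and $\floor{\,\cdot\,/q}$ is monotone, this dominates the demand of $N'$, giving $|W_j\cap C'|\ge\min\{\floor{|N'|/q},|C'|\}$. Finally, because the MD rule only ever adds candidates and never deletes them, the terminal committee $W$ contains $W_j$, whence $|W\cap C'|\ge|W_j\cap C'|\ge\min\{\floor{|N'|/q},|C'|\}$, and PSC follows.

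The main obstacle is the middle step: recognizing that a solid coalition whose members may disagree both inside $C'$ and outside $C'$ is nonetheless swallowed whole by a single prefix-equivalence class at stage $|C'|$. Once this identification is in place, monotonicity of the demand in the coalition size together with the append-only behaviour of $W$ makes the remaining bookkeeping routine.
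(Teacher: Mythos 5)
Your proof is correct and follows essentially the same route as the paper's: both hinge on observing that a solid coalition for $C'$ is absorbed into the prefix-equivalence class examined at stage $j=|C'|$, where Step~2 forces the corresponding demand to be met, and that the committee only grows thereafter. The paper phrases this as a proof by contradiction and is slightly looser (it says the voters in $N'$ ``form an equivalence class'' rather than are contained in one); your explicit $N'\subseteq\hat N$ together with monotonicity of $\floor{\,|N'|/q\,}$ in $|N'|$ tidies up that point but does not change the argument.
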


				\begin{proof}
					{ Let $W$ be an outcome of the MD rule (by Lemma~\ref{lemma: 1}, $|W|=k$). For  sake of a contradiction, suppose   $W$ does not satisfy PSC:  there exists a positive integer $\ell$ and set of voters $N'$ with $|N'|\ge \ell q$ solidly supporting a candidate subset $C'$ such that  $|W\cap C'|<\min\{\ell, |C'|\}$. But then, at stage $j=|C'|$ of the MD rule, the voters in $N'$ form an equivalence class { (with $j$-prefix $C'$) that satisfies the if condition in Step~2.} Thus, the algorithm cannot proceed to the next stage while guaranteeing the output $W$---a contradiction. 	}
%
\end{proof}				

		Lemma~\ref{lemma: 2} implies the following. Any voting rule that proceeds by sequentially electing candidates  is guaranteed to satisfy PSC so long as: (1) candidates are only added if they resolve a PSC violation, and (2) candidates that resolve violation of PSC for smaller prefixes of voters preferences are elected before candidates that resolve violations for larger prefixes. 
	
			Lemma~\ref{lemma: 3} provides a converse  to Lemma~\ref{lemma: 2}: every PSC committee is a possible outcome of MD. 
	
			\begin{lemma}\label{lemma: 3}
		If  $W$ is a committee satisfying PSC, then there exists a path of the Dummett tree that selects $W$.				\end{lemma}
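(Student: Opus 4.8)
The plan is to run the (possibly non-deterministic) MD rule and, at every point where the if-condition in Step~2 forces us to add a candidate, to resolve the tie so that the added candidate lies in $W$. I would prove by induction on the number of candidates selected that this is always possible, maintaining the invariant that the running selection $\tilde{W}$ satisfies $\tilde{W}\subseteq W$ throughout the execution. The base case $\tilde{W}=\emptyset\subseteq W$ is immediate.

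For the inductive step, suppose $\tilde{W}\subseteq W$ and that the if-condition fires at stage $j$ for some equivalence class $N'$ with $j$-prefix $C'$, i.e., $|\tilde{W}\cap C'|<\min\{\floor{|N'|/q},|C'|\}$. The key observation is that $N'$ is a solid coalition supporting $C'$ (this is the content of the footnote on Step~1) and that $|N'|\ge \floor{|N'|/q}\cdot q$, since $\floor{|N'|/q}\le |N'|/q$. Hence I may invoke PSC on the committee $W$ with $\ell=\floor{|N'|/q}$ to obtain $|W\cap C'|\ge \min\{\floor{|N'|/q},|C'|\}>|\tilde{W}\cap C'|$. Because $\tilde{W}\subseteq W$ gives $\tilde{W}\cap C'\subseteq W\cap C'$, this strict inequality forces $(W\cap C')\setminus \tilde{W}\neq\emptyset$. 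Any such candidate lies in $C'\setminus\tilde{W}$ and so is a legitimate choice for Step~2; selecting it preserves the invariant.

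Finally I would close with a size count. The execution just described is a genuine run of MD---candidates are added only when the if-condition holds, and only from $C'\setminus\tilde{W}$---so the corresponding sequence of tie-breaking decisions is a path of the Dummett tree, and its output $\tilde{W}$ satisfies $\tilde{W}\subseteq W$. By \lemref{lemma: 1} this output has size exactly $k$, and since $|W|=k$ we conclude $\tilde{W}=W$, as required.

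I expect the only delicate point to be the availability claim: that whenever Step~2 fires, some candidate of $W$ is still selectable. Everything else (maintaining the invariant and the concluding count) is routine once the inequality $|W\cap C'|>|\tilde{W}\cap C'|$ is established. The one subtlety to handle carefully is matching the floor quantity $\floor{|N'|/q}$ appearing in the MD if-condition with the integer $\ell$ for which PSC can be applied; this is exactly why one checks $\floor{|N'|/q}\cdot q\le |N'|$ before invoking the axiom.
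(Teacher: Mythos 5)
Your proposal is correct and follows essentially the same route as the paper's proof: simulate a run of MD that always resolves ties in favour of candidates of $W$, use induction to maintain $\tilde{W}\subseteq W$, and invoke the PSC property of $W$ (with $\ell=\floor{|N'|/q}$) to show a candidate of $W\cap C'$ is always available when Step~2 fires. Your write-up is in fact somewhat more explicit than the paper's, both in the availability argument and in the closing cardinality count via Lemma~\ref{lemma: 1} showing $\tilde{W}=W$.
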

				\begin{proof}
					Suppose $W$ satisfies PSC.
					We simulate an MD outcome that makes decisions along the tree and selects candidates from $W$. The proof is by induction on the number of stages. At each stage, we can select candidates only from $W$ to fulfill the PSC demands of voters. { First, note that since $W$ satisfies PSC, it does not satisfy the if condition in Step~2 for any $j$.} { Now, if we have selected $W_j\subseteq W$ by the $j$-th stage and the if condition in Step~2 is satisfied, then there exists $c\in W\backslash W_j$ that can be added at Step~2.  Repeating this process leads to the outcome $W$ and, hence, there   exists a path in the Dummett tree that selects $W$. }
%
					\end{proof}

					Combining the two lemmas gives us the following  equivalence theorem. 
			
					\begin{theorem}\label{theorem: equivalence of PSC and MD and Dummett tree}
						The following three statements are equivalent.
						\begin{description}
							\item[(i)] A committee satisfies PSC.
							\item[(ii)] A committee is an outcome of some path of the Dummett tree.
							\item[(iii)] A committee is a possible outcome of MD.
			 
						\end{description}
						\end{theorem}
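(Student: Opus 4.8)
The plan is to prove the three-way equivalence as a short cycle of implications, since the substantive work has already been done in Lemmas~\ref{lemma: 2} and~\ref{lemma: 3}; the theorem is essentially a bookkeeping step that stitches these together. The first thing I would do is record that statements (ii) and (iii) describe the very same set of committees, and hence are equivalent without further argument. This is true by construction of the Dummett tree: as explained in Section~3, the tree is defined so that each path encodes one sequence of tie-breaking choices made during a run of the (non-deterministic) MD rule, and conversely every such run corresponds to a path. Thus a committee is an outcome of some path of the Dummett tree precisely when it is attainable by some resolution of the ties in MD, i.e.\ exactly when it is a possible outcome of MD.

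With (ii) $\Leftrightarrow$ (iii) in hand, it remains to link either of them to (i). For (i) $\Rightarrow$ (ii) I would invoke Lemma~\ref{lemma: 3} verbatim: if $W$ satisfies PSC, that lemma constructs a path of the Dummett tree selecting $W$. For the reverse direction (ii) $\Rightarrow$ (i) (equivalently (iii) $\Rightarrow$ (i)), I would invoke Lemma~\ref{lemma: 2}, which guarantees that every committee arising from a path of the tree---equivalently, every possible MD outcome---satisfies PSC. Composing these gives the cycle (i) $\Rightarrow$ (ii) $\Leftrightarrow$ (iii) $\Rightarrow$ (i), so all three statements are equivalent. I would also note in passing that Lemma~\ref{lemma: 1} ensures every object under discussion in (ii) and (iii) is a genuine committee of the required size $k$, so the three statements really are comparing committees of the same kind.

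The only genuinely delicate point---and the one I would be most careful to state explicitly rather than gloss over---is the equivalence of (ii) and (iii), namely that the correspondence between MD runs and tree paths is a bijection at the level of \emph{outcomes}. Concretely, I must make sure that no possible MD outcome fails to appear on some path (completeness of the tree) and that no path yields a committee unreachable by an actual MD run (soundness). Both follow from the definition of the tree, where branching at a node corresponds exactly to the choice ``for any such $c' \in C' \backslash W$'' in Step~2, but because this is the hinge of the argument I would spell out that a path and its associated run select the same candidates at every stage. Everything else is an immediate appeal to the preceding lemmas, so I do not anticipate any further obstacle.
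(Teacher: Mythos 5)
Your proposal is correct and matches the paper's approach exactly: the paper also obtains the theorem by combining Lemma~\ref{lemma: 2} (for (ii)/(iii) $\Rightarrow$ (i)) with Lemma~\ref{lemma: 3} (for (i) $\Rightarrow$ (ii)), treating (ii) $\Leftrightarrow$ (iii) as definitional. Your extra care in spelling out the outcome-level correspondence between MD runs and tree paths is a reasonable elaboration of what the paper leaves implicit.
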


			{\small
	\setstretch{.75}
\bibliographystyle{aer2}
		  \bibliography{../adtbib/abb.bib,../adtbib/adt.bib}
}

	\end{document}